\newtheorem{theorem}{Theorem}[section]
\newtheorem{corollary}[theorem]{Corollary}
\newtheorem{lemma}[theorem]{Lemma}
\begin{document}
\title{Packing Rotating Segments}
\author{Ali Gholami Rudi\thanks{
{Department of Electrical and Computer Engineering},
{Bobol Noshirvani University of Technology}, {Babol, Mazandaran, Iran}.
Email: {\tt gholamirudi@nit.ac.ir}.}}
\date{}
\maketitle
\begin{abstract}
We show that the following variant of labeling rotating
maps is NP-hard, and present a polynomial approximation
scheme for solving it.
The input is a set of feature points on a map, to each
of which a vertical bar of zero width is assigned.
The goal is to choose the largest subsets of the bars such
that when the map is rotated and the labels remain vertical,
none of the bars intersect.
We extend this algorithm to the general case where labels are arbitrary objects.
\end{abstract}
\noindent
{\textbf{Keywords}: Computational geometry,
geometric independent set, map labeling, rotating maps,
polynomial-time approximation scheme}.\\
\\

\section{Introduction}
We study the following problem for a set of points on the plane.
To each of these points a vertical segment is assigned.  The goal is to
place the maximum possible number of these segments on
the plane such that: i) each segment intersects its
corresponding point (the point is the anchor of the segment),
ii) when the plane is rotated, each segment is rotated in the
reverse direction around its anchor point to remain vertical,
iii) during rotation of the plane, no two segments intersect.

Placing as many labels as possible on a map (known as map labeling) is
a classical optimization problem in cartography and graph drawing \cite{formann91}.
For static maps, the problem of placing labels on a map can be stated
as an instance of geometric independent set problem (sometimes also
called packing for fixed geometric objects): given a set of geometric
objects, the goal is to find its largest non-intersecting subset.
In the weighted version, each object also has a weight and the
goal is to find a non-intersecting subset with the maximum possible weight.

A geometric intersection graph, with a vertex for each object and
an edge between intersecting objects, converts this geometric problem to
the classical maximum independent set for graphs, which is proved to be
NP-hard and difficult to approximate even for a factor of $n^{1-\epsilon}$,
where $n$ is the number of vertices and $\epsilon$ is any non-zero positive constant \cite{hastad96}.
Although the geometric version remains NP-hard even for unit disks \cite{fowler81},
it is easier to approximate and several polynomial-time approximation
schemes (PTAS) have been presented for this problem \cite{hochbaum85,agarwal98,erlebach05,chan03,chan12}.
Note that a PTAS finds a $(1 - \epsilon)$-approximate solution in time $o(n^{f(\epsilon)})$,
for a parameter $\epsilon > 0$ and some function $f$ independent of $n$.

Maps may be dynamic, and allow zooming, panning, or rotation,
as recent technology has made prevalent.  Most work on labeling dynamic
maps consider zooming and panning operations \cite{been06}, and few
results have been published for labeling rotating maps.
Gemsa et al.~\cite{gemsa11} where the first to study this problem.
They assumed the model presented by Been et al.~\cite{been06} for zoomable maps,
to define the consistency of a rotating map.
For the $k$R-model, in which each label may disappear at most $k$ times during
rotation, they showed that labeling rotating maps is NP-hard, even for
unit-height labels, when the goal is to maximise the total duration in
which labels are visible without intersecting other labels.
For unit-height labels, they also presented a $1/4$-approximation algorithm
and a PTAS, with the assumption that the number of anchor points contained
in any rectangle is bounded by a constant multiplied by its area,
each label may intersect a constant number of other labels,
and the aspect ratio of the labels is bounded (the
first two may not hold in real world maps).
They~\cite{gemsa16} later extended their results by presenting
heuristic algorithms, and also an integer linear programming (ILP)-based
solution for labeling rotating maps under the same assumptions.
They also experimentally evaluated these algorithms.
The size of their ILP modeling of the problem was later improved
by Cano et al.~\cite{cano17}.

Yokosuka and Imai~\cite{yokosuka13} solved the problem of maximising
the size of labels for rotating maps.  Although this problem is
NP-hard for static maps, they presented an exact $O(n \log n)$-time
algorithm for the case where the anchor points can be inside the
labels, and also when the labels are of unit height and the points
can be on the boundary of the labels.
Gemsa et al.~\cite{gemsa13} also studied a trajectory-based labeling
problem, when the trajectory of the viewport of the map is specified
as an input.

In this paper, we study a variant of the general problem of
labeling rotating maps, where the labels have zero width and
the goal is to find the largest subset of the labels that
remain disjoint during rotation.
Unlike Gemsa et al.~\cite{gemsa11}, we do not make simplifying
assumptions about the distribution of the labels:
a label may intersect any number of other labels, and the
number of feature points in any rectangle may not be proportional
to its area.
We present a PTAS for this problem, and then extend our results
to the general case, where the labels can be arbitrary objects.

This paper is organised as follows.
We first prove that labeling rotating segments is NP-hard in
Section~\ref{spre}.  Then, in Section~\ref{spck}, we present a
PTAS for this problem, after reviewing some of the PTAS presented
for the geometric independent set problem.
We finish this section by extending the approximation scheme
to arbitrary objects.
Finally, we conclude this paper in Section~\ref{scon}.

\section{Notation and Preliminary Results}
\label{spre}
Let $P = \{ p_1, p_2, ..., p_n \}$ be a set of points, and
$\ell_i$ be the length of the vertical segment corresponding to $p_i$.
A labeling $\phi$ for $P$, assigns a vertical segment
to some of the points in $P$.  If a segment is assigned to
point $p_i$ in labeling $\phi$, we say $p_i$ is included in $\phi$,
or equivalently, $p_i \in \phi$.
The notation $\phi({p_i})$ denotes the segment assigned to $p_i$,
and $|\phi|$ indicates the size of $\phi$.
Note that the length of the segment assigned to $p_i$ is $\ell_i$.

If segment $s$ is assigned to $p_i$ in $\phi$,
$s$ should intersect $p_i$.  The point at which $s$
and $p_i$ intersect is the \emph{anchor} point of $s$;
alternatively, we say $s$ is anchored at $p_i$.
When the plane is rotated, $s$ is rotated in the reverse
direction around $p_i$ to remain vertical.
In 1-position (1P) model, the anchor point of a segment
should be its bottom end point.
In 2-position (2P) model, either the top or the bottom
end points of a segment can be its anchor point.
In fixed-position (FP) model, the anchor point of each
segment is fixed (but different segments may be anchored
at different positions).
In the sliding model, any point on the segment can be its
anchor point.

A labeling is \emph{proper}, if its assigned segments do not intersect
during the rotation of the plane.
In the Maximum Rotating Independent Set (MRIS) problem for vertical
segments, the goal is to find the largest proper labeling.
Instead of rotating the plane and keeping visible segments vertical,
we can equivalently fix the plane and rotate all visible labels in
the reverse direction.  This is what we do in the rest of this paper.

In Theorem~\ref{tred}, we show that MRIS for vertical segments is
NP-hard by a reduction from the Geometric Maximum Independent Set (GMIS)
for unit disks, which is already proved to be NP-hard~\cite{fowler81}.

\begin{theorem}
\label{tred}
MRIS for a set of segments in 1P model is NP-hard.
\end{theorem}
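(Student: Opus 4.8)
The plan is to reduce GMIS for unit disks, known to be NP-hard~\cite{fowler81}, to MRIS for segments in the 1P model; the reduction maps each disk to a segment anchored at the disk's center, and its correctness rests on a characterization of when two rotating segments ever collide. Fix two anchors $p_i, p_j$ with assigned vertical segments of lengths $\ell_i, \ell_j$. Under rotation every included segment is rotated by the \emph{same} angle about its own anchor, so $\phi(p_i)$ and $\phi(p_j)$ stay parallel throughout; in the 1P model each is the one-sided segment of length $\ell_i$ (resp.\ $\ell_j$) emanating from its anchor in the common direction. Two parallel nondegenerate segments can intersect only when they lie on a common line, and $\phi(p_i), \phi(p_j)$ are collinear for exactly the two rotation angles at which the common segment direction is parallel to the line through $p_i$ and $p_j$. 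Evaluating the one-dimensional overlap of the two segments at those two critical angles (one places $\phi(p_j)$ ``ahead of'' $\phi(p_i)$ along that line, the other reverses the roles) shows that $\phi(p_i)$ and $\phi(p_j)$ meet at some moment during a full rotation if and only if $\lvert p_i p_j\rvert \le \max(\ell_i,\ell_j)$. In particular, when $\ell_i = \ell_j = \ell$ the two segments collide if and only if $\lvert p_i p_j\rvert \le \ell$.

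Given this, the reduction is direct. Take a unit-disk instance with disks of radius $\rho$ centered at $p_1,\dots,p_n$, and form the MRIS instance on the same point set, assigning to each $p_i$ a vertical segment of length $\ell_i = 2\rho$. Two of these disks intersect iff $\lvert p_i p_j\rvert \le 2\rho$, which by the characterization above holds iff the two corresponding rotating segments collide. Hence the conflict graph of the rotating segments equals the intersection graph of the disks; a labeling is proper iff its included anchors form an independent set in that graph; and the largest proper labeling has exactly the size of the maximum independent set of the disk system. Since the construction is computable in linear time, this is a polynomial-time (indeed Karp) reduction, so MRIS for segments in the 1P model is NP-hard.

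The only point that needs care is the handling of degenerate configurations --- whether two segments sharing only an endpoint, or two externally tangent disks, count as intersecting. Under the choice $\ell_i = 2\rho$ the conventions line up on both sides (a shared endpoint of the two rotating segments occurs exactly when $\lvert p_i p_j\rvert = 2\rho$, i.e.\ exactly when the disks are tangent), so the equivalence is unaffected; alternatively, a generic perturbation of the disk centers, which does not change the independent-set structure of the instance, removes all center-distances equal to $2\rho$. This boundary bookkeeping, together with getting the collision condition right (the $\max$, and its symmetry in $i$ and $j$), is the only subtlety; the rest of the argument is a mechanical translation through the collision characterization.
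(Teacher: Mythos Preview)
Your proof is correct and follows the same reduction as the paper: map each unit disk to a 1P segment of length twice the radius anchored at the disk's center, so that the conflict graph of the rotating segments coincides with the intersection graph of the disks. You make the collision criterion $\lvert p_ip_j\rvert \le \max(\ell_i,\ell_j)$ explicit and handle the tangency boundary case, which the paper glosses over, but the underlying argument is identical.
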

\begin{proof}
We reduce any instance of GMIS for unit disks to an instance of MRIS for segments.
Let $D$ be a set of $n$ unit disks on the plane and
let $P$ be the centre of these disks.
Also, let $\ell_i$ be $2$ for $1 \le i \le n$.
We show that from every non-intersecting subset of disks in $D$
we can obtain a proper labeling of $P$ with the same size.

Let $D'$ be a non-intersecting subset of $D$, and let $P'$ be their centres.
Since the disks in $D'$ are non-intersecting,
the distance between any pair of points in $P'$ is at least 2.
Let $\phi$ be the labeling of size $|D'|$ that assigns a
segment of length $\ell_i$, anchored at its bottom end point,
to each point $p_i$ of $P'$.  These segments cannot intersect during
rotation: the segments are always parallel, and since the distance
of their anchors is at least 2, they do not intersect.
This implies that $\phi$ is proper.

For the other direction, let $\phi$ be a proper labeling of $P$.
Let $D'$ be the set of disks, whose centres are in $\phi$.
Since $\phi$ is a proper labeling in 1P model,
the distance between any pair of points in $P'$ is at least two.
This implies that the disks corresponding to $P'$ are non-intersecting.
\end{proof}

In Lemma~\ref{l2to1}, we show that labeling in the 2P model is as
difficult as labeling in 1P model.  Note that a labeling in
the latter is also a labeling in the former.

\begin{lemma}
\label{l2to1}
Let $\phi$ be a proper labeling of a set of points $P$ in the 2P model.
If all segments in $\phi$ are changed to be anchored at their bottom end point,
the resulting labeling is also proper.
\end{lemma}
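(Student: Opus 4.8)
The plan is to reduce the statement to a pairwise claim and then to characterise exactly when two synchronously rotating segments intersect, depending only on whether they are anchored at the same relative position (both at the bottom, or both at the top endpoint) or at opposite positions. Since properness is a condition on each unordered pair of segments independently, it suffices to show: for every $p_i,p_j\in\phi$, if $\phi(p_i)$ and $\phi(p_j)$ stay disjoint throughout the rotation, then the two segments obtained by re-anchoring both at their bottom endpoints also stay disjoint.

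For a fixed pair I would write each rotated segment as its anchor plus $[0,\ell]$ times a unit direction, this direction being $u(\theta)$ for a bottom-anchored segment and $-u(\theta)$ for a top-anchored one, where $u(\theta)$ is the vertical direction rotated by $\theta$. The key structural point is that under the synchronized rotation two segments anchored at the same relative position remain mutually parallel, both with direction $u(\theta)$, while two anchored at opposite positions remain anti-parallel; in either case the two segments lie on parallel lines, so they can meet only at the (at most two) angles at which the line through $p_i$ and $p_j$ becomes parallel to $u(\theta)$. At such an angle the intersection test is one-dimensional, and working out the interval overlaps gives the two facts I need: same-position segments intersect during the rotation iff $d\le\max(\ell_i,\ell_j)$, whereas opposite-position segments intersect iff $d\le\ell_i+\ell_j$, where $d=|p_i-p_j|$.

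The conclusion is then pure monotonicity: $\max(\ell_i,\ell_j)\le\ell_i+\ell_j$, so re-anchoring a pair to put both segments at the bottom endpoint only makes the intersection condition harder to satisfy. A pair already in the same relative position is unchanged; an opposite-position pair that was disjoint had $d>\ell_i+\ell_j\ge\max(\ell_i,\ell_j)$ and so stays disjoint. (Two coincident anchors cannot both belong to a proper labeling, so $d>0$ throughout, and the above case distinction is exhaustive.) Ranging over all pairs shows that the re-anchored labeling is proper. I expect the only genuine work to be the case analysis behind those two facts — in particular, checking that synchronized rotation preserves the parallel/anti-parallel relation, and correctly identifying which aligned configuration gives the $\max(\ell_i,\ell_j)$ threshold and which gives the $\ell_i+\ell_j$ threshold; everything afterwards is one line.
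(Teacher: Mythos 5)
Your proposal is correct and follows essentially the same route as the paper: reduce properness to a pairwise condition, observe that two bottom-anchored segments rotating in sync remain parallel and hence can intersect only when aligned with the line $p_ip_j$, which forces $|p_ip_j|\le\max(\ell_i,\ell_j)$, and then contradict the properness of $\phi$. The paper's only shortcut is that, instead of working out the opposite-anchoring threshold $\ell_i+\ell_j$, it notes that $|p_ip_j|\le\max(\ell_i,\ell_j)$ already means the longer original segment sweeps over the other segment's anchor point during rotation, so the original pair intersects regardless of how it was anchored.
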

\begin{proof}
Let $\phi'$ be the labeling obtained by changing the
anchor points of all segments assigned in $\phi$.
If $\phi'$ is not proper, there exists at least a pair of points $p_i$ and $p_j$,
such that $\phi'(p_i)$ and $\phi'(p_j)$ intersect during rotation.
Without any loss of generality, suppose $\ell_i \ge \ell_j$.
Therefore, the distance between $p_i$ and $p_j$ is at most $\ell_i$,
implying that, at some point during rotation $\phi(p_i)$ intersects
$p_j$ (and thus $\phi(p_j)$), contradicting the assumption that $\phi$
is a proper labeling.
\end{proof}

For the sliding model, in Lemma \ref{lsld} we show that there is an
optimal labeling, in which all labels are anchored at their midpoint.

\begin{lemma}
\label{lsld}
Let $\phi$ be a proper labeling of a set of points $P$ in the sliding model.
If all segments assigned in $\phi$ are changed to be
anchored at their midpoint, the resulting labeling is also proper.
\end{lemma}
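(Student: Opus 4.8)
The plan is to reduce properness to a simple pairwise distance condition. Since all labels of a labeling are rotated by the same angle about their respective anchors, at every moment of the rotation any two assigned segments $\phi(p_i)$ and $\phi(p_j)$ are parallel. Two distinct parallel lines are either disjoint or equal, so $\phi(p_i)$ and $\phi(p_j)$ can meet only at those rotation angles for which the common direction of the segments is parallel to the line through $p_i$ and $p_j$. There are exactly two such angles — the two orientations of that line — and at each of them the two segments are collinear, so detecting an intersection reduces to deciding whether two intervals on the line $p_ip_j$ overlap. This generalizes the observation used in the proof of Lemma~\ref{l2to1}.

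First I would fix coordinates on the line through $p_i$ and $p_j$, placing $p_i$ at $0$ and $p_j$ at $d = |p_ip_j| > 0$. Let $\alpha_i, \beta_i \ge 0$ be the lengths of the parts of $\phi(p_i)$ below and above its anchor, so $\alpha_i + \beta_i = \ell_i$; define $\alpha_j, \beta_j$ similarly. In one of the two collinear configurations $\phi(p_i)$ occupies $[-\alpha_i,\beta_i]$ and $\phi(p_j)$ occupies $[\,d-\alpha_j,\, d+\beta_j\,]$; in the other, $\phi(p_i)$ occupies $[-\beta_i,\alpha_i]$ and $\phi(p_j)$ occupies $[\,d-\beta_j,\, d+\alpha_j\,]$. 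A short computation with these intervals shows that $\phi(p_i)$ and $\phi(p_j)$ intersect at some point during the rotation if and only if $d \le \max(\alpha_j+\beta_i,\ \alpha_i+\beta_j)$.

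Now assume $\phi$ is proper. Applying this criterion to every pair $p_i, p_j \in \phi$ gives $|p_ip_j| > \alpha_i + \beta_j$ and $|p_ip_j| > \alpha_j + \beta_i$, and averaging the two inequalities yields
\[
|p_ip_j| \;>\; \frac{(\alpha_i+\beta_j)+(\alpha_j+\beta_i)}{2} \;=\; \frac{\ell_i+\ell_j}{2}.
\]
Let $\phi'$ be the labeling obtained from $\phi$ by moving every anchor to the midpoint of its segment, so that $\alpha_i = \beta_i = \ell_i/2$ for all $i$. By the criterion above, $\phi'(p_i)$ and $\phi'(p_j)$ stay disjoint throughout the rotation precisely when $|p_ip_j| > (\ell_i+\ell_j)/2$, which is exactly the inequality just derived. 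Hence no two segments of $\phi'$ meet and $\phi'$ is proper.

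I expect the main obstacle to be the geometric characterization in the first two paragraphs: making precise that the rotating segments remain parallel and therefore collide only in the two collinear configurations, and correctly accounting for both orientations of the line $p_ip_j$ (this is what produces the maximum of the two sums). Once that is established, the conclusion is the one-line averaging estimate above; the same criterion, specialized to $\{\alpha_i,\beta_i\} = \{0,\ell_i\}$, also yields Lemma~\ref{l2to1}.
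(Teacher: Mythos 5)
Your proof is correct and follows essentially the same route as the paper's: both arguments reduce the analysis to the two collinear configurations of a pair of parallel rotating segments, extract the two distance inequalities $|p_ip_j| > \alpha_i+\beta_j$ and $|p_ip_j| > \alpha_j+\beta_i$ from the properness of $\phi$, and average them to get $|p_ip_j| > (\ell_i+\ell_j)/2$, which is exactly the condition for the midpoint-anchored labeling to be proper. Your version merely states the intersection criterion as an explicit equivalence in coordinates (and argues directly rather than by contradiction), which makes the step the paper leaves implicit --- that parallel segments can only meet when collinear --- fully precise.
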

\begin{proof}
Let $\phi'$ be the new labeling.
If $\phi'$ is not a proper labeling, there exists points
$p$ and $q$ such that $\phi'(p)$ and $\phi'(q)$ intersect
during rotation, which implies that the distance between
$p$ and $q$ is at most $(\ell_p + \ell_q) / 2$,
where $\ell_p$ and $\ell_q$ are the lengths of the segments
assigned to $p$ and $q$, respectively.
Let $\phi(p) = p'p''$ and $\phi(q) = q'q''$, in which
$p'$ and $q'$ are the top end points of these segments.
Obviously, $|pp'| + |pp''|$ is equal to $\ell_p$ and
$|qq'| + |qq''|$ is equal to $\ell_q$.
Duration rotation, when $q'$ is on segment $pq$, we have
$|qq'| + |pp''| < |pq|$; otherwise the segments intersect.
Similarly, when $q''$ is on $pq$, we have $|qq''| + |pp'| < |pq|$.
This, however, implies that $|qq'| + |qq''| + |pp'| + |pp''| < 2 \cdot |pq|$,
or equivalently $\ell_p + \ell_q  < 2 |pq|$.
This yields $|pq| > (\ell_p + \ell_q) / 2$, and a contradiction.
Therefore, $\phi'$ is also a proper labeling.
\end{proof}

In the next section, we study MRIS problem for segments in 1P model,
but our results also apply to 2P model by Lemma~\ref{l2to1}.

\section{A PTAS for MRIS}
\label{spck}
For a set of points $P = \{p_1, p_2, ..., p_n\}$,
let $D_i$ denote the disk centred at $p_i$ with radius $\ell_i$.
In a proper labeling $\phi$ in 1P model,
if points $p_i$ and $p_j$ are both present in $\phi$,
their corresponding disks, $D_i$ and $D_j$, may intersect, but
cannot contain the centre of the other.
This is what makes MRIS for segments in 1P model different from
GMIS, in which the objects in the output should be disjoint.
In this section, we first review some of the PTAS presented
for GMIS, and adapt one of them for our problem.

Note that transforming an instance of MRIS to GMIS, based
on the idea used in Theorem~\ref{tred}, does not work,
since the length of the segments (the radius of the disks)
are not equal.  To see this, consider two disks $D_1$
and $D_2$ of radius 2 and 20, respectively, in an instance of MRIS.
To obtain an equivalent instance for the GMIS, we replace each
disk with a disk of half its radius, as in Theorem~\ref{tred}.
Therefore, we have two disks $D'_1$ and $D'_2$,
corresponding to $D_1$ and $D_2$, of radius 1 and 10, respectively.
$D'_1$ and $D'_2$ may be a solution in the GMIS instance,
but their corresponding disks may not be a solution in the MRIS
instance ($D'_1$ and $D'_2$ may be disjoint but $D_2$ may contain
the centre of $D_1$).

\subsection{Geometric Maximum Independent Set}

In what follows, we review some of the PTAS presented for GMIS.
Hochbaum and Maass~\cite{hochbaum85} presented a PTAS for packing
$n$ unit disks or squares on the plane in $n^{O(1 / {\epsilon^2})}$ time.
The algorithm places a grid on the plane and by removing the objects that
intersect the boundary of grid cells, solves the problem for each cell
independently using brute force.
To find the exact solution, this process is repeated
after shifting the grid a polynomial number of times in $\epsilon$.
Agarwal et al.~\cite{agarwal98} improved this algorithm using dynamic
programming to work for fat objects (informally,
objects with bounded aspect ratio) of similar size to
$n^{O(1/ \epsilon^{d-1})}$, in which $d$ is the number of dimensions.

Erlebach et al.~\cite{erlebach05} extended Hochbaum and Maass' algorithm
using a multi-level grid to handle arbitrary-sized but fat
objects.  To find an optimal solution for each grid cell
and for every disjoint subset the objects that intersect
its boundary, the problem is recursively solved for lower
grid levels using dynamic programming.
Similar to Hochbaum and Maass~\cite{hochbaum85},
they shift the grids to obtain an exact solution.
Chan~\cite{chan03} presented a similar algorithm for fat objects using
a quadtree instead of multi-level grids, improving the time
complexity to $n^{O(1/ \epsilon^{d-1})}$.
In the same paper, Chan~\cite{chan03} also presented a divide-and-conquer
algorithm based on the geometric separator theorem~\cite{smith98},
in time $n^{1/\epsilon^d}$ in the plane for unweighted and fat objects,
improving the space complexity of the previous algorithm.
More recently, Chan and Har-Peled~\cite{chan12} presented an PTAS
based on local search with the time complexity $n^{1/\epsilon^d}$.
They also presented a constant-factor approximation algorithm
based on linear programming relaxation.

In most of these results, input objects are assumed to be fat;
``a set of objects are fat if for every axis-aligned hypercube of
side length $r$, we can find a constant number of points such that every
object that intersects the hypercube and has diameter at least $r$
contains one of the chosen points \cite{chan12}.''
For non-fat objects, approximating independent sets is more
difficult, and, for instance for arbitrary rectangles,
only recently a quasi-PTAS has been presented~\cite{adamaszek13}.

\subsection{An Algorithm for Packing Rotating Segments}
\label{sseg}

We adapt Chan's \cite{chan03} shifted quadtree algorithm for solving MRIS for
segments in 1P model.  The main reason for preferring this algorithm
to other PTAS for GMIS, is its lower time complexity.
The algorithm presented by Hochbaum and Maass \cite{hochbaum85} is simpler
but cannot handle arbitrary sized segments in MRIS.
To make this section mostly self-contained, we repeat the necessary definitions
and proofs from \cite{chan03}, and try to simplify them where possible.

The algorithm presented by Chan \cite{chan03}, which uses
the shifting-quadtree technique, assumes fat input objects.
Also, the objects in the output of GMIS should be disjoint.
In MRIS input objects are rotating segments (segments are not fat).
We can consider the disks that result from the rotation of
these segments, but then, the objects in the output of MRIS
may not be disjoint.  Therefore, the results of \cite{chan03}
do not apply to MRIS.  We modify that algorithm for solving MRIS.

For simplicity, we first map (using scaling and translation) all disks
$D = \{D_1, D_2, ..., D_n\}$ to fit inside a unit square with its lower
left corner at the origin, and store them in a quadtree.
Let $r_i$ be the radius of disk $D_i$ after this mapping.
A quadtree cell at depth $d$ has side length $2^{-d}$.
Two disks are \emph{centre-disjoint}, if none contains the centre of the other.
A disk of radius $\ell$ is \emph{$k$-aligned}, if it is inside a quadtree
cell of size at most $k \ell$.

\begin{figure}
	\centering
	\includegraphics{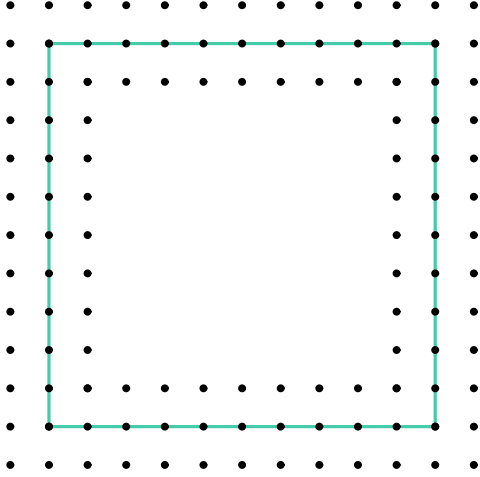}
	\caption{Points placed around the boundary of a grid cell in Lemma~\ref{lfat}}
	\label{ffat}
\end{figure}

\begin{lemma}
\label{lfat}
Let $C$ be a quadtree cell at depth $d$, and $B$ be a set of $k$-aligned disks
that intersect the boundary of $C$.
The size of any centre-disjoint subset $S$ of $B$ is bounded by $ck$
for some constant $c$.
\end{lemma}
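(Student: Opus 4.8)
The plan is to bound $|S|$ by separating the disks of $S$ into those that are small relative to $C$ and those that are large, and treating the two groups by different packing arguments. Throughout, let $\delta = 2^{-d}$ denote the side length of $C$ and let $o$ be its centre, so that $C$ has circumradius less than $\delta$. We may assume that every disk of $B$ has points both in the interior and in the exterior of $C$: a disk that meets $\partial C$ without doing so lies in (the closure of) $C$ and is irrelevant as a boundary constraint, being handled inside the subtree of $C$.

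The first step records the radius lower bound forced by $k$-alignment. Let $D_i\in S$ have radius $r_i$ and, by $k$-alignment, let $Q_i$ be a quadtree cell with $D_i\subseteq Q_i$ and $\mathrm{side}(Q_i)\le k r_i$. Since $D_i$ has a point outside $C$, it is not contained in $C$; since it has an interior point of $C$, the cell $Q_i$ overlaps $C$ in a set with nonempty interior. As $Q_i$ and $C$ are quadtree cells they are then nested, and the only possibility is $C\subsetneq Q_i$; hence $\mathrm{side}(Q_i)\ge 2\delta$ and $r_i\ge 2\delta/k$. Now write $S=S_{\mathrm s}\cup S_{\mathrm b}$, where $S_{\mathrm s}$ is the set of disks with $r_i\le 4\delta$ and $S_{\mathrm b}$ the set with $r_i>4\delta$.

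For $S_{\mathrm s}$ I would argue by area. Each such disk meets $\partial C$, so its centre lies within distance $r_i$ of the curve $\partial C$, whose length is $4\delta$; for a dyadic class of disks with $r_i\in[\rho,2\rho)$ (and $\rho\le 4\delta$) all centres therefore lie in a neighbourhood of $\partial C$ of area $O(\delta\rho)$. Centre-disjointness forces the centres of any two disks of $S$ to be at distance at least $\max(r_i,r_j)\ge\rho$, so the radius-$\rho/2$ disks around these centres are pairwise disjoint and still lie in a region of area $O(\delta\rho)$; hence the class has $O(\delta/\rho)$ disks. Summing over the dyadic classes with $\rho$ between $2\delta/k$ and $4\delta$ gives a geometric series dominated by its smallest term $\rho=\Theta(\delta/k)$, so $|S_{\mathrm s}|=O(k)$.

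For $S_{\mathrm b}$ the area argument is far too weak — it only yields a bound like $O(d)$ — so this is where the real work lies, and I expect the law-of-cosines estimate below to be the main obstacle. The key observation is that a disk much larger than $C$ that crosses $\partial C$ has its centre \emph{far from $o$ in a definite direction}, and centre-disjointness separates these directions. Indeed, if $D_i\in S_{\mathrm b}$ crosses $\partial C$ then some point of $\partial C$ lies on $\partial D_i$, so $\bigl|\,|c_i-o|-r_i\,\bigr|<\delta$; since $r_i>4\delta$ this gives $|c_i-o|>3\delta$, so the unit vector $u_i=(c_i-o)/|c_i-o|$ is well defined. For $i\neq j$, applying the law of cosines to the triangle $o\,c_i\,c_j$ together with $|c_i-c_j|\ge\max(r_i,r_j)$ and the estimates $\bigl|\,|c_i-o|-r_i\,\bigr|<\delta$, $\bigl|\,|c_j-o|-r_j\,\bigr|<\delta$, $|c_i-o|,|c_j-o|>3\delta$, a short computation yields $\cos\angle(u_i,u_j)<5/6$. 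Thus the $u_i$ are points of the unit circle pairwise at angular distance exceeding $\arccos(5/6)$, so there are only a constant number of them, i.e.\ $|S_{\mathrm b}|=O(1)$. Combining the two bounds gives $|S|=|S_{\mathrm s}|+|S_{\mathrm b}|=O(k)$ with an absolute constant, as claimed. The threshold is placed at $4\delta$ rather than at $\delta$ precisely so that the borderline disks with $r_i$ barely above $\delta$ — for which "direction" is almost meaningless and the cosine estimate degrades — are absorbed harmlessly into $S_{\mathrm s}$ (where they contribute only $O(1)$).
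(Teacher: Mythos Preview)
Your argument is essentially correct but takes a genuinely different route from the paper. The paper gives a single stabbing argument: it lays down a set $X$ of at most $12k$ points on three concentric square curves of sides $\delta$, $\delta-2r$, $\delta+2r$ (with $r=\delta/k$), spaced $r$ apart along each curve, observes that every disk of radius at least $r$ meeting $\partial C$ must contain some point of $X$, and then invokes the fact that any fixed point lies in at most six pairwise centre-disjoint disks, yielding $|S|\le 72k$ in one stroke. Your split into $S_{\mathrm s}$ and $S_{\mathrm b}$, with a dyadic area count for the former and an angular-separation bound for the latter, is more elaborate; the paper's proof is shorter, avoids any case analysis by disk size, and produces an explicit constant.

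There is one real gap in your $S_{\mathrm b}$ step. From ``$D_i$ has points strictly inside and strictly outside $C$'' you conclude that some point of $\partial C$ lies on $\partial D_i$, and hence $\bigl|\,|c_i-o|-r_i\,\bigr|<\delta$. The first implication fails precisely when $C\subseteq D_i$: then $\partial C\subseteq\mathrm{int}\,D_i$, the curves $\partial D_i$ and $\partial C$ need not meet, and $|c_i-o|$ can be far smaller than $r_i-\delta$, so the direction $u_i$ carries no useful information and your cosine bound does not apply. The fix is short: every such disk contains $o$, and only a bounded number of pairwise centre-disjoint disks can contain a common point (this is exactly the fact the paper's proof uses, with constant~$6$), so you may discard these $O(1)$ exceptional disks before running the law-of-cosines computation on the remainder of $S_{\mathrm b}$.
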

\begin{proof}
If a $k$-aligned disk intersects $C$, its radius should be
at least $r = 2^{-d}/k$, based on the definition of $k$-aligned objects.
Therefore, since all members of $B$ are $k$-aligned, the radius of any
of them is at least $r$.
Let $C'$ and $C''$ be squares of side length $2^{-d} - 2r$ and $2^{-d} + 2r$,
with the same centre of gravity as $C$.
Place a set $X$ of points on $C$, $C'$, and $C''$ with distance $r$,
as shown in Figure~\ref{ffat}.
Since $2^{-d}/r$ is $k$, the size of $X$ no greater than $12k$.
Any disk of radius at least $r$ that intersects the boundary of $C$
contains at least one point from $X$.
On the other hand, for any point $p$ on the plane,
the maximum number of centre-disjoint disks that can contain $p$ is 6.
Therefore, the number of points in any centre-disjoint subset of $B$
is at most $72k$.
\end{proof}

Let $C$ be a quadtree cell, $B$ be a set of centre-disjoint disks
intersecting its boundary, and $I$ be a set disks inside $C$.
Let $\textrm{MRIS}(C, B, I)$ denote the maximum size of a
centre-disjoint subset of $I$ such that its union with $B$ is also
centre-disjoint.

\begin{lemma}
\label{lquad}
Suppose all disks in $D$ are stored in a quadtree and $k$-aligned.
Let $C$ be a quadtree cell, $B$ be a set of disks intersecting
the boundary of $C$, and $I$ be the set of disks completely inside $C$.
The value of $\textrm{MRIS}(C, B, I)$ can be computed
from the value of $\textrm{MRIS}$ for the children of $C$
in time $n^{O(k)}$.
\end{lemma}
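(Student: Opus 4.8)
The plan is to compute $\textrm{MRIS}(C, B, I)$ by a standard quadtree dynamic programming recursion, where the only nontrivial point is bounding the number of "interface" subsets we must enumerate at each cell. First I would handle the base case: if $C$ has no children (it is a leaf, or contains at most one disk center), then $I$ has bounded size and $\textrm{MRIS}(C,B,I)$ is computed by brute force. For the recursive case, let $C_1, C_2, C_3, C_4$ be the children of $C$. The disks in $I$ split into those that lie entirely inside some single child $C_j$, and those that cross the boundary between children (equivalently, intersect the shared internal edges of $C$). Call the latter set $M$. Because all disks are $k$-aligned, a disk in $M$ that intersects the internal boundary of $C$ must have radius at least a constant fraction of $C$'s side length, so by the same packing argument as in Lemma \ref{lfat}, any centre-disjoint subset of $M$ has size at most $c'k$ for some constant $c'$.

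Next I would set up the recursion: enumerate every centre-disjoint subset $M' \subseteq M$ of size at most $c'k$ whose union with $B$ is centre-disjoint; there are at most $n^{O(k)}$ such subsets since $|M| \le n$. For each fixed choice of $M'$, the remaining disks to be selected lie strictly inside the individual children, and a disk inside $C_j$ can only conflict (via the centre-disjoint relation) with $B$, with $M'$, or with other selected disks inside $C_j$ — it cannot conflict with a disk strictly inside a different child $C_{j'}$, because two such disks are separated by the internal boundary and hence are automatically centre-disjoint. Therefore the optimal completion decomposes over the children: for each child $C_j$, form the set $B_j$ consisting of those disks of $B \cup M'$ that intersect the boundary of $C_j$ (this is again a centre-disjoint set intersecting $\partial C_j$), let $I_j$ be the disks strictly inside $C_j$, and the best completion inside $C_j$ has size exactly $\textrm{MRIS}(C_j, B_j, I_j)$, a value available by hypothesis from the children's table. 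Summing these four values and adding $|M'|$, then maximizing over all valid $M'$, gives $\textrm{MRIS}(C, B, I)$.

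Finally I would verify correctness and the running time. Correctness: any feasible centre-disjoint subset $S$ of $I$ with $S \cup B$ centre-disjoint restricts to a valid $M' = S \cap M$ and to valid completions $S \cap I_j$ inside each child, so the recursion is an upper bound; conversely any choice assembled by the recursion is centre-disjoint because conflicts only occur within a child or with $B \cup M'$, all of which are checked — so it is also a lower bound, and the two match. Running time: there are $n^{O(k)}$ choices of $M'$; for each, forming the $B_j$ and looking up the four child values takes polynomial time; the dominant cost is thus $n^{O(k)}$ per cell. The main obstacle, and the step deserving the most care, is the claim that a disk of $M$ crossing the internal boundary of $C$ is forced to be large enough that $k$-alignment yields the $O(k)$ bound on centre-disjoint subsets of $M$ — this is where one must argue, exactly as in Lemma \ref{lfat} with a slightly different placement of witness points around the internal cross of $C$, that a $k$-aligned disk meeting that cross has radius $\Omega(2^{-d}/k)$ and hence at most $O(k)$ of them can be pairwise centre-disjoint.
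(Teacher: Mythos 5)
Your proposal matches the paper's proof essentially step for step: the set $M$ you enumerate is exactly the paper's $B'$ (the disks of $I$ meeting the children's boundaries), the recursion summing $\textrm{MRIS}(C_j, B_j, I_j)$ plus $|M'|$ is the paper's $\textrm{MRIS}_J$ formula, and the $n^{O(k)}$ bound comes from the same Lemma~\ref{lfat}-style packing argument showing any centre-disjoint subset of $M$ has size $O(k)$. Your version is, if anything, slightly more careful, since you spell out the correctness of the decomposition (disks strictly inside distinct children are automatically centre-disjoint) and the radius lower bound $\Omega(2^{-d}/k)$ that the paper only gestures at.
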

\begin{proof}
Let $C_i$ for $1 \le i \le 4$ denote the child cells
of $C$ in the quadtree.
For a set of disks $X$, let $C^\textit{in}(X)$ and $C^\textit{on}(X)$ 
denote the subset of $X$ completely inside and the subset intersecting
the boundary of $C$, respectively.
Let $B'$ be the subset of $I$ that intersect the boundary
of the children of $C$.
For a centre-disjoint subset $J$ of $B'$, whose
members are also centre-disjoint from the members of $B$,
the value $\textrm{MRIS}_J(C, B, I)$,
denoting the maximum size of a centre-disjoint subset of $I \setminus B$
that includes $J$ equals:
\[
\textrm{MRIS}_J(C, B, I) = \sum_{C_i} \textrm{MRIS}(C_i, C_i^\textit{on}(B \cup J), C_i^\textit{in}(I)) + |J|
\]
To compute the value of $\textrm{MRIS}(C, B, I)$,
we find the maximum value of $\textrm{MRIS}_J(C, B, I)$ for
every centre-disjoint subset $J$ of $B'$.
Using an argument similar to Lemma~\ref{lfat}, we can show that the
size of any centre-disjoint subset of $B'$ is bounded by $O(k)$.
Therefore, we can compare the value of $\textrm{MRIS}_J(C, B, I)$
for every such subset (there are at most $n^{O(k)}$ such subsets)
to find the value of $\textrm{MRIS}(C, B, I)$ in time $n^{O(k)}$.
\end{proof}

\begin{theorem}
\label{tseg}
A $(1+\epsilon)$-approximate solution to MRIS for segments in 1P model
can be computed in time $n^{O(1/\epsilon)}$, for any real constant $\epsilon$,
where $0 < \epsilon < 1$.
\end{theorem}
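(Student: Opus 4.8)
The plan is to follow Chan's shifted-quadtree framework, using Lemma~\ref{lfat} and Lemma~\ref{lquad} as the per-cell building blocks, and to choose the alignment parameter $k$ as a function of $\epsilon$. First I would fix $k = \Theta(1/\epsilon)$ and, for a shift vector $v$, build the quadtree on the disks $D$ after translating by $v$. Not every disk will be $k$-aligned with respect to a given shift, so the standard trick is to \emph{discard} the disks that are not $k$-aligned in the current shifted quadtree; the key counting fact is that, averaging over an appropriate set of $\Theta(k)$ shifts (or equivalently a random shift), each disk fails to be $k$-aligned for only an $O(1/k)$ fraction of the shifts. Hence for some shift $v$ the discarded disks account for at most an $O(1/k) = O(\epsilon)$ fraction of the optimal solution $\mathrm{OPT}$, so solving the problem exactly on the surviving $k$-aligned disks still yields a $(1-O(\epsilon))$-approximation; rescaling $\epsilon$ by a constant then gives the claimed $(1+\epsilon)$-approximation (equivalently, a solution of size at least $(1-\epsilon)\mathrm{OPT}$).

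Next I would argue that, once all surviving disks are $k$-aligned, the quantity $\mathrm{MRIS}(C,B,I)$ can be computed bottom-up over the quadtree. At a leaf or at a cell whose interior contains no disk the value is trivial (it is $|B|$-independent and equal to $0$ plus boundary feasibility, or simply $0$); at an internal cell $C$ we invoke Lemma~\ref{lquad}, which computes $\mathrm{MRIS}(C,B,I)$ from the values $\mathrm{MRIS}(C_i,\cdot,\cdot)$ at the four children in time $n^{O(k)}$. Summing over all $O(n)$ relevant quadtree cells — the quadtree can be pruned so that only $O(n)$ cells are ever non-trivial — and over the $n^{O(k)}$ relevant boundary subsets $B$ at each cell, the total running time is $n^{O(k)} = n^{O(1/\epsilon)}$. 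Finally, evaluating $\mathrm{MRIS}$ at the root cell (the unit square) with $B = \emptyset$ and $I$ equal to all surviving disks returns the size of a maximum centre-disjoint subset of the survivors; by the reasoning of Section~\ref{sseg}, centre-disjointness of the disks $D_i$ is exactly properness of the corresponding $1$P labeling, so this is a proper labeling of the desired size.

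The main obstacle I expect is the shifting/alignment argument: one has to set up the family of shifts carefully and prove the averaging bound that a disk is non-$k$-aligned for only an $O(1/k)$ fraction of shifts, and then combine this with the fact that we are deleting disks from $\mathrm{OPT}$ rather than merely charging geometric error — so I must check that deletion of an $O(\epsilon)$ fraction of the anchors of $\mathrm{OPT}$ leaves the remaining labeling proper (which it does, since properness/centre-disjointness is hereditary under taking subsets). A secondary technical point is bounding the number of quadtree cells that actually matter: a priori the quadtree has unbounded depth because disks have arbitrary radii, so I would compress it to its $O(n)$ ``interesting'' nodes (those containing a disk center or where the branching structure changes) and verify that $\mathrm{MRIS}$ on the omitted degree-one chains is inherited unchanged from the single non-trivial descendant. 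Everything else — the recursion correctness in Lemma~\ref{lquad}, the $O(k)$ bound on centre-disjoint boundary families from Lemma~\ref{lfat} — is already in place, so the write-up is mostly assembling these pieces and tracking the $\epsilon$-to-$k$ bookkeeping.
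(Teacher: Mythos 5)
Your proposal is correct and follows essentially the same route as the paper: choose $k=\Theta(1/\epsilon)$, run the bottom-up dynamic program of Lemma~\ref{lquad} over a compressed quadtree in $n^{O(k)}$ time, and use Chan's shifting argument over $\Theta(k)$ translations so that some shift preserves all but an $O(1/k)$ fraction of the optimum among the $k$-aligned disks. Your write-up is in fact somewhat more explicit than the paper's on the averaging bound for the shifts and on why discarding non-aligned disks preserves properness (hereditarity of centre-disjointness), but these are elaborations of the same argument, not a different one.
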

\begin{proof}
Let $P$ and $D$ be defined as in the beginning of this section.
Let $k = 1/\epsilon$.
We can store $D$ in a compressed quadtree (a quadtree in which nodes
with only one non-empty child cell are merged, resulting in $O(n)$
nodes), and modify Lemma~\ref{lquad} to consider merged nodes.
Let $\overline{C}$ be the root cell of this quadtree.
We compute the value of $\textrm{MRIS}(C, B, I)$ for every possible
quadtree cell $C$, and inputs $B$ and $I$ in a bottom-up manner.
\begin{enumerate}
\item
For leaves, $\textrm{MRIS}(C, B, I)$ can be computed in $O(1)$ ($I$
has only one member).
\item
For any other cell $C$, $I$ is always $C^\textrm{in}(D)$, and
$B$ is always a subset of $C^\textrm{on}(D)$ (there are $n^{O(k)}$
subsets); computing $\textrm{MRIS}(C, B, I)$ for every such
input takes $n^{O(k)}$ time by Lemma~\ref{lquad}.
\end{enumerate}
Therefore, we can find the exact value of $\textrm{MRIS}(\overline{C}, \emptyset, D)$,
by computing $\textrm{MRIS}$ for every node of the quadtree
recursively in time $n^{O(1/\epsilon)}$,
supposing every disk is $k$-aligned.

Using the shifting technique of \cite{chan03} (also \cite{hochbaum85} and
\cite{erlebach05}), we can translate the disks $k$ times,
such that in one of these translations at least $(1 + \epsilon)$ of the
disks in an optimal solution to MRIS for $D$ are $k$-aligned;
computing $\textrm{MRIS}(\overline{C}, \emptyset, D)$ after
each such translation, and taking their maximum value,
achieves the desired approximation factor.
\end{proof}

\subsection{Packing Arbitrary Rotating Objects}
\label{sobj}
The algorithm of Section~\ref{sseg} can be extended to work for a
combination of vertical and horizontal segments (or of any orientation),
or even for arbitrary objects in FP model.
For arbitrary objects, we similarly denote with $D_i$ the disk
that results by rotating the $i$-th object around its anchor point.
Unlike vertical segments in 1P model, two objects may intersect during
rotation, even if their corresponding disks are centre-disjoint.
To see this, consider a horizontal segment of unit length,
anchored at its left endpoint at the origin, and a vertical segment
of unit length, anchored at its bottom endpoint at $(1, -.5)$.

We modify the results of Section~\ref{sseg} as follows.
Instead of finding a centre-disjoint subset of a set of
disks, our goal is finding a subset of disks, such
that there exists a proper labeling that includes all of the
objects that correspond to them.
Lemma~\ref{lfat} can therefore be modified as follows (Lemma~\ref{lany}).

\begin{lemma}
\label{lany}
Let $C$ be a quadtree cell at depth $d$, $B$ be a set of $k$-aligned
disks that intersect the boundary of $C$, and $O$ be the set
of objects that correspond to the members of $B$.
The size of any proper labeling $\phi$ of $O$ is bounded by $ck$
for some constant $c$.
\end{lemma}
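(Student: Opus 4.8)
The plan is to mimic the counting argument of Lemma~\ref{lfat} as closely as possible, replacing "centre-disjoint" by "simultaneously realizable in a proper labeling" and absorbing the extra geometric slack that arises because two objects may collide during rotation even when their disks are centre-disjoint. First I would recall that, as in Lemma~\ref{lfat}, every $k$-aligned disk meeting the boundary of $C$ has radius at least $r = 2^{-d}/k$, and I would place the same set $X$ of $O(k)$ points (on $C$, on the shrunk square $C'$ of side $2^{-d}-2r$, and on the expanded square $C''$ of side $2^{-d}+2r$, spaced $r$ apart) so that every disk in $B$ contains at least one point of $X$. So far nothing has changed: $|X|\le 12k$.

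The difference is the per-point bound. In Lemma~\ref{lfat} one uses that at most $6$ pairwise centre-disjoint unit-type disks can share a common point; here I would instead argue that if a point $p\in X$ is contained in the disks $D_{i_1},\dots,D_{i_m}$ whose objects all belong to a proper labeling $\phi$, then $m$ is bounded by an absolute constant $c'$ (depending only on the objects' bounded aspect ratio, which we assume for arbitrary objects just as Chan does). The key geometric observation is that $p\in D_{i_j}$ means the anchor $a_{i_j}$ of the $j$-th object lies within distance $r_{i_j}=\operatorname{diam}$-scale of $p$, and $D_{i_j}$ itself has radius $r_{i_j}\ge r$; so all these disks have comparable radii (between $r$ and, say, $2^{-d}+2r = O(kr)$, hence within a bounded ratio once $k$ is the fixed constant $1/\epsilon$) and all pass through $p$. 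I would then show that if too many such objects coexisted in a proper labeling, two of them with nearby anchors and comparable size must overlap at some rotation angle — this is where the fatness/bounded-aspect-ratio hypothesis enters, giving a packing-type bound of the form "at most $c'$ objects of a proper labeling can have their rotation-disks all passing through a single point." Multiplying, $|\phi| \le |X|\cdot c' \le 12 c' k = ck$ with $c = 12c'$, which is the claimed bound.

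The main obstacle is precisely this last combinatorial-geometric step: ruling out many coexisting objects through one point. Unlike the vertical-segment case of Lemma~\ref{lfat}, a clean "$6$ disks per point" fact is unavailable, so I would need to make the fatness assumption do the work — roughly, that each object, over the course of the rotation, sweeps a region containing a disk of radius proportional to its rotation-disk radius, and these swept regions are themselves fat, so a standard covering/packing argument limits how many can overlap a fixed point while remaining mutually disjoint at every angle. Care is needed because "proper" forbids intersection at \emph{every} rotation angle simultaneously, which is a \emph{weaker} constraint than forbidding intersection at one angle; nevertheless a fixed fraction of the rotation range suffices to force a collision between two sufficiently close, sufficiently large fat objects, and that is enough to obtain the constant $c'$. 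Once that constant is in hand, the rest of the proof is the verbatim counting of Lemma~\ref{lfat}, and the same modification then propagates to an analogue of Lemma~\ref{lquad} and hence to the approximation theorem for arbitrary objects in the FP model.
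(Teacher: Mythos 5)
Your proposal leaves its central step --- the claim that at most $c'$ objects of a proper labeling can have their rotation-disks all passing through a common point --- as an admitted ``obstacle,'' and proposes to close it by importing a fatness/bounded-aspect-ratio hypothesis that appears nowhere in the statement of the lemma and that the paper deliberately avoids (the whole point of Section~\ref{sobj} is to handle \emph{arbitrary} objects in the FP model). As written, the argument is therefore incomplete, and even if the missing packing step were carried out, the added hypothesis would weaken the result.

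The missing idea is much simpler and makes the entire re-run of the counting argument unnecessary. Since the anchor $p_j$ of object $j$ lies \emph{inside} object $j$, and since $D_i$ is the full disk swept by object $i$ over a complete rotation, the containment $p_j \in D_i$ means that at some rotation angle object $i$ covers $p_j$ and hence intersects object $j$. So any proper labeling of $O$ corresponds to a centre-disjoint subfamily of $B$: properness \emph{implies} centre-disjointness. (It is the converse that fails --- the paper's example of a horizontal and a vertical unit segment shows centre-disjoint disks whose objects still collide --- but the converse is not what is needed here; your remark that properness is the ``weaker'' constraint has the direction reversed.) With that one observation, the bound $ck$ follows verbatim from Lemma~\ref{lfat}, with the same constant, no new point set $X$, and no fatness assumption.
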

\begin{proof}
Since the anchor of each object is inside that object,
only centre-disjoint objects can appear in $\phi$ (if an
objects intersects the anchor of another object during
rotation, they certainly intersect).
Lemma~\ref{lfat} shows that the size of any centre disjoint
subset of $B$ is bounded by $ck$ for some constant $k$.
Thus, the size of $\phi$ cannot be any greater.
This implies the required upper bound.
\end{proof}

Using Lemma~\ref{lany} in Lemma~\ref{lquad} and Theorem~\ref{tseg},
they imply the following Corollary, after slight modifications.

\begin{corollary}
A $(1+\epsilon)$-approximate solution to MRIS for arbitrary objects in FP model
can be computed in time $n^{O(1/\epsilon)}$, for any real constant $\epsilon$,
where $0 < \epsilon < 1$.
\end{corollary}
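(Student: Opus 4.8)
The plan is to reproduce the proof of Theorem~\ref{tseg} almost verbatim, replacing everywhere the notion of a \emph{centre-disjoint} subset of disks by that of a subset of disks whose corresponding objects admit a \emph{proper} labeling, and invoking Lemma~\ref{lany} wherever Lemma~\ref{lfat} was used. Concretely, for each object~$i$ let $D_i$ be the disk swept by rotating the object about its fixed, interior anchor; scale and translate so that $D = \{D_1,\dots,D_n\}$ lies in the unit square, store $D$ in a compressed quadtree (merged nodes handled as in Theorem~\ref{tseg}), and for a cell $C$, a set $B$ of disks meeting $\partial C$ whose objects are pairwise non-intersecting during rotation, and the set $I$ of disks inside $C$, redefine $\textrm{MRIS}(C,B,I)$ to be the maximum size of a subset $S\subseteq I$ such that the objects corresponding to $S\cup B$ form a proper labeling. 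Setting $k=1/\epsilon$, Lemma~\ref{lany} bounds the size of every such $B$ by $O(k)$, and the same argument applied to the children of $C$ bounds the set $B'$ of disks of $I$ that cross a child boundary by $O(k)$; hence there are only $n^{O(k)}$ relevant choices of $B$ and of any guessed subset $J\subseteq B'$.

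The point that needs checking is that the quadtree recurrence of Lemma~\ref{lquad} still decomposes correctly. For this I would argue, exactly as in the proof of Lemma~\ref{lany}, that two objects whose disks lie completely inside two distinct child cells of $C$ cannot intersect during rotation: such an intersection would force their disks to be centre-intersecting (since each anchor lies inside its object), i.e.\ one disk would contain the other object's anchor, which lies in a different child cell whose interior is disjoint from the first disk's cell --- a contradiction. Thus all interactions occur among $B$, the guessed $J$, and the disks completely inside a single child, so the same identity
\[
\textrm{MRIS}_J(C,B,I) = \sum_{C_i} \textrm{MRIS}\bigl(C_i,\, C_i^{\textit{on}}(B\cup J),\, C_i^{\textit{in}}(I)\bigr) + |J|
\]
holds, where $J$ now ranges over those subsets of $B'$ whose objects, together with those of $B$, form a proper labeling. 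Taking the maximum over these $n^{O(k)}$ subsets $J$ yields $\textrm{MRIS}(C,B,I)$ from the children's values in time $n^{O(k)}$, giving the analogue of Lemma~\ref{lquad}.

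Finally I would assemble the algorithm as in Theorem~\ref{tseg}: compute $\textrm{MRIS}(\overline{C},\emptyset,D)$ bottom-up over the $O(n)$ nodes of the compressed quadtree, in total time $n^{O(1/\epsilon)}$ under the assumption that every disk is $k$-aligned, then apply Chan's shifting step --- translating the disks $k$ times so that in one of the translations all but a $\Theta(\epsilon)$ fraction of some fixed optimal proper labeling consists of $k$-aligned disks --- and return the best of the $k$ values computed. I expect no genuinely new difficulty in the assembly; the only step that needs care is the locality argument of the second paragraph, together with the observation that passing only the swept disks (rather than the full object shapes) across cell boundaries loses no information, which holds because each disk corresponds to a unique object and properness of a set of objects is decided pairwise.
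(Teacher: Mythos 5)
Your overall plan is exactly what the paper intends: the paper's ``proof'' of this corollary is literally the sentence that Lemma~\ref{lany} should be substituted for Lemma~\ref{lfat} inside Lemma~\ref{lquad} and Theorem~\ref{tseg}, and you have filled in the same substitution with the same quadtree recurrence, the same $n^{O(k)}$ enumeration of boundary subsets, and the same shifting step. The redefinition of $\textrm{MRIS}(C,B,I)$ in terms of proper labelings, the use of Lemma~\ref{lany} to bound $|B|$ and $|J|$, and the observation that properness is a pairwise condition are all correct and are the ``slight modifications'' the paper has in mind.

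One justification in your locality argument is wrong, however, even though the claim it supports is true. You assert that if two objects intersect during rotation then their swept disks must be centre-intersecting (one disk contains the other's anchor). This implication is false, and the paper itself gives the counterexample at the start of Section~\ref{sobj}: a horizontal unit segment anchored at the origin and a vertical unit segment anchored at $(1,-0.5)$ intersect at rotation angle $0$, yet neither swept disk contains the other's anchor. The correct (and simpler) justification of the step you need is containment, not centre-intersection: at every rotation angle each object lies entirely inside its swept disk $D_i$, so if $D_i$ and $D_j$ lie completely inside two distinct child cells their interiors are disjoint and the objects can never meet; likewise a disk completely inside $C_i$ can only interact with disks inside $C_i$ or disks crossing $\partial C_i$, which is what makes passing $C_i^{\textit{on}}(B\cup J)$ to the child sufficient. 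With that one-line repair the decomposition identity and the rest of your assembly go through as stated.
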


\section{Concluding Remarks}
\label{scon}
The algorithms presented in Section~\ref{spck} can be extended to solve
MRIS in $\mathbb{R}^d$ for $d > 2$, where a $d$-dimensional map can be
rotated in any direction, with the time complexity $n^{O(1/{\epsilon^{d-1}})}$.


\end{document}